\newcommand{\dtw}{\operatornamewithlimits{DTW}}
\newtheorem{thm}{Theorem}[section]
\newtheorem{lem}[thm]{Lemma}
\newtheorem{cor}[thm]{Corollary}
\theoremstyle{remark}
\newtheorem{lemma}[thm]{Lemma}
\theoremstyle{remark}
\newcommand{\defn}[1]{\textbf{\emph{#1}}}
\renewcommand{\paragraph}[1]{\vspace{.2 cm} \noindent \textbf{#1}.}
\newcommand{\wt}{\text{wt}}
\newcommand{\aug}{\text{Aug}}
\begin{document}

\title{Binary Dynamic Time Warping in Linear Time}
 \author{William Kuszmaul}
\affil{MIT CSAIL \\ \texttt{kuszmaul@mit.edu}}
\date{}
\maketitle

\begin{abstract}
  Dynamic time warping distance (DTW) is a widely used distance measure between time series $x, y \in \Sigma^n$. It was shown by Abboud, Backurs, and Williams that in the \emph{binary case}, where $|\Sigma| = 2$, DTW can be computed in time $O(n^{1.87})$. We improve this running time $O(n)$.

  Moreover, if $x$ and $y$ are run-length encoded, then there is an algorithm running in time $\tilde{O}(k + \ell)$, where $k$ and $\ell$ are the number of runs in $x$ and $y$, respectively. This improves on the previous best bound of $O(k\ell)$ due to Dupont and Marteau.
\end{abstract}
\pagenumbering{gobble}

\newpage
\pagenumbering{arabic} 

\section{Introduction}

Dynamic time warping distance (DTW) is a widely used distance measure
between time series \cite{muller2007dynamic}. DTW is particularly
flexible in dealing with temporal sequences that vary in speed.  To
measure the distance between two sequences, portions of each sequence
are allowed to be warped (meaning that a character may be replaced
with multiple consecutive copies of itself), and then the warped
sequences are compared by summing the distances between corresponding
pairs of characters. DTW's many applications include phone
authentication \cite{dtwapp1}, signature verification \cite{dtwapp2},
speech recognition \cite{dtwapp3}, bioinformatics \cite{dtwapp4},
cardiac medicine \cite{dtwapp5}, and song identification
\cite{dtwapp6}.

The textbook dynamic-programming algorithm for DTW runs in time
$O(n^2)$, which can be prohibitively slow for large inputs. Moreover,
conditional lower bounds \cite{DTWhard, DTWhard2, kuszmaul2019dynamic}
prohibit the existence of a strongly subquadratic-time
algorithm\footnote{An algorithm is said to run in strongly
  subquadratic time if it runs in time $O(n^{2- \epsilon})$ for some
  constant $\epsilon > 0$. Although strongly subquadratic time
  algorithms are prohibited by conditional lower bounds, runtime
  improvements by subpolynomial factors are not. Such improvements
  have been achieved \cite{DTWsubquadratic}.}, unless the Strong
Exponential Time Hypothesis is false.

On the practical side, the difficulty of computing DTW directly has
motivated the development of fast heuristics
\cite{dtwband,kp99,kp00,k02,BUWK15,PFWNCK16} which typically lack
provable guarantees.

On the theoretical side, the difficulty of computing DTW directly has
led researchers to focus on certain important special cases.\footnote{Researchers have also studied related problems that are not constrained by the aforementioned conditional lower bounds. See, for example, work by Braverman et al. \cite{comm} on communication complexity and by Kuszmaul \cite{kuszmaul2019dynamic} on approximation algorithms.}  Hwang
and Gelfand \cite{hwang2017sparse} show how to compute $\dtw(x, y)$ in time
$O((s + t) n)$, where $|x| = |y| = n$ and where $s$ and $t$ are the
number of non-zero values in $x$ and $y$, respectively. Kuszmaul
\cite{kuszmaul2019dynamic} showed how to compute $\dtw(x, y)$ in time
$O(n \dtw(x, y))$, and also gave an $O(n^{\epsilon})$-approximation
algorithm with running time $\tilde{O}(n^{2 - \epsilon})$. Recently,
Froese et al. \cite{froese2019fast} gave an algorithm parameterized by the
run-length-encoding lengths of $x$ and $y$, running in time
$O((k + \ell) n)$, where $k$ and $\ell$ are the number of
repeated-letter runs in $x$ and $y$ respectively. In the case where
$k, \ell \in O(\sqrt{n})$, the algorithm achieves a faster time of
$O(k^2 \ell + \ell^2 k)$.

\paragraph{Binary DTW} One case that is of special interest is that
where $x$ and $y$ are \defn{binary time series} -- that is,
$x, y \in \{0, 1\}^n$. In this case, the conditional lower bounds
\cite{DTWhard, DTWhard2, kuszmaul2019dynamic} do not apply. Abboud,
Backurs, and Williams \cite{DTWhard2} gave an algorithm for computing
binary DTW in time $O(n^{1.87})$, building on an algorithm given by
\cite{convolution} for the Bounded Monotone Convolution Problem\footnote{For a full discussion of the $O(n^{1.87})$-time algorithm, see the extended version \cite{DTWhard2a} of \cite{DTWhard2}.}. Other
work has given algorithms running in time $O(st)$
\cite{hwang2019binary, mueen2016awarp}, where $s$ and $t$ are the
number of $1$s in $x$ and $y$ respectively, and in time $O(k \ell)$,
where $k$ and $\ell$ are the number of repeated-letter runs in $x$ and
$y$ respectively \cite{dupont2015coarse}.

The binary DTW problem has also received attention from practitioners. For example,
several of the CASAS human activity data sets \cite{casas} that have been
examined in the context of DTW \cite{mueen2016awarp, schaar2020faster}
consist of binary data points (e.g., sensor data indicating when a
door is open/closed).

Binary DTW has also been studied in the context of a large
number $r$ of time series $x^{(1)}, x^{(2)}, \ldots, x^{(r)}$ being
considered simultaneously. In this case, researchers have focused on
the Binary Mean Problem \cite{schaar2020faster}, in which the goal is
to find a single time series $x^*$ that minimizes the sum of dynamic
time warping distances $\sum_i \dtw(x^*, x^{(i)})$. Leveraging the
binary DTW algorithm of \cite{DTWhard2}, Schaar, Froese, and
Niedermeier \cite{schaar2020faster} gave an $O(r n^{1.87})$-time
algorithm for the Binary Mean Problem. The algorithm was not included
in the subsequent empirical evaluation \cite{schaar2020faster},
however, due to the impracticality of the $n^{1.87}$ term.

\paragraph{Binary DTW in linear time}
In this note, we show that binary DTW can be computed in linear time
$O(n)$, substantially improving on the previous state of the art of
$O(n^{1.87})$. Our algorithm is very simple, and hinges on the
relationship between binary DTW and minimum weight bipartite matching.

Our algorithm can also be modified for the case where $x$ and $y$ are
run-length encoded. If $x$ and $y$ consist of $k$ and $\ell$
repeated-letter runs, respectively, then our algorithm runs in time
$O((k + \ell) \log (k + \ell))$. 

\paragraph{An alternative solution using the Monge property}
After writing this paper, we also learned of an alternative solution to computing binary DTW in (near) linear time. As we shall discuss in more detail later, Abboud, Backurs, and Williams \cite{AbboudBa15} reduce the problem of binary DTW to the following: given a sequence of $I$ numbers $a_1, \ldots, a_I$ and a value $r$, choose a subsequence of $r$ elements that minimizes the sum, subject to the constraint that no two elements are adjacent. In subsequent work on knapsack and graph algorithms, Axiotis and Tzamos \cite{ax} give a $O(n \log n)$-time solution to this problem (Lemma 19 of \cite{ax}) using results for how to solve a general class of dynamic programs with the so-called Monge property. Thus, our work serves two main purposes: to explicitly make the observation that binary DTW can be computed quickly, and to give an extremely simple algorithm that achieves truly linear time. 

\section{Preliminaries}\label{secpreliminaries}

In this paper we capture treat time series as strings. The \defn{runs}
of a string are the maximal substrings consisting of a single repeated
letter. For example, the runs of $aabbbccd$ are $aa$, $bbb$, $cc$, and
$d$. Given a string $x$, we can \defn{extend} a run in $x$ by further
duplicating the letter which populates the run. For example, the
second run in $aabbbccd$ can be extended to obtain $aabbbbccd$. Any
string obtained from $x$ by extending $x$'s runs is an
\defn{expansion} of $x$. For example, $aaaabbbbccccdddd$ is an
expansion of $aabbbccd$.

Consider two strings $x$ and $y$ with characters from a metric space
$(\Sigma, d)$. A \defn{correspondence} between $x$ and $y$ is a pair
$(\overline{x}, \overline{y})$ of equal-length expansions of $x$ and
$y$. The \defn{value} of a correspondence is the difference
$$\sum_i d(\overline{x}_i, \overline{y}_i)$$ between the two
expansions. A correspondence between $x$ and $y$ is said to be
\defn{optimal} if it has the minimum attainable value, and the
resulting value is called the \defn{dynamic time warping distance}
$\dtw(x, y)$ between $x$ and $y$.

\section{Computing Binary DTW in Linear Time}

In this section we show that binary DTW can be computed in linear time:
\begin{thm}
  Let $x \in \{0, 1\}^n$ and $y \in \{0, 1\}^m$ be binary
  strings. Then $\dtw(x, y)$ can be computed in time $O(n + m)$.
  \label{thm:main}
\end{thm}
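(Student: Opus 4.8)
The plan is to recast $\dtw(x,y)$ as a minimum-cost monotone lattice-path problem and then exploit the coarse run structure of binary strings to reduce it to a simple selection/matching problem solvable in linear time. Concretely, let $M$ be the $n \times m$ Boolean matrix with $M_{ij} = [x_i \neq y_j]$. Unwinding the definition of a correspondence, $\dtw(x,y)$ equals the minimum, over monotone staircase paths from cell $(1,1)$ to cell $(n,m)$ that move right, down, or diagonally, of the total weight of the cells visited; this is just the standard dynamic-programming reformulation of DTW specialized to the metric $d(a,b)=|a-b|$ on $\{0,1\}$. The first thing I would record is the block structure of $M$: grouping rows by the runs of $x$ and columns by the runs of $y$ partitions $M$ into a $k \times \ell$ grid of constant blocks, where the block indexed by run $p$ of $x$ and run $q$ of $y$ is all-zero when the two runs share a character and all-one otherwise. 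Since consecutive runs alternate in character, the zero-blocks form a checkerboard pattern along diagonals, and extracting this structure directly from $x$ and $y$ takes $O(n+m)$ time.

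The core structural step is to understand how cheaply a path can traverse $M$. The key observations are: (i) a path moves through any zero-block at no cost; (ii) it can step diagonally from the bottom-right corner of a zero-block directly into the top-left corner of the next zero-block on the same diagonal, at no cost and preserving the diagonal offset $p-q$; and (iii) to change the offset, the path must cross a one-block transversally, which costs the full length of the run being crossed. I would make this precise with an exchange argument showing that an optimal path never wastes an offset change, realizing the net required offset change (which is $k-\ell$) with roughly $|k-\ell|/2$ crossings, all in the same direction, since any canceling increase-then-decrease of the offset only adds cost. The only remaining contributions come from the fact that the path is anchored at the corners $(1,1)$ and $(n,m)$, forcing it to begin and end inside whatever blocks contain those corners. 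The upshot should be a clean description: $\dtw(x,y)$ equals an endpoint correction, governed by the characters and lengths of the first and last runs of $x$ and $y$, plus the minimum total length of a set of roughly $|k-\ell|/2$ runs (drawn from whichever string has more runs) that must be crushed to reconcile the two run-sequences.

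With the problem in this form, I would phrase the remaining optimization as the non-adjacent minimum-sum selection problem highlighted in the introduction: from the run lengths of the side with more runs, choose about $|k-\ell|/2$ runs, no two adjacent, of minimum total length, where the non-adjacency reflects that each crushed run merges its two like-charactered neighbors (so crushing adjacent runs would over-count). Equivalently, this is a minimum-weight non-crossing bipartite matching between the two run-sequences in which matched same-character runs are free and unmatched runs pay their length. Finally I would solve this selection in linear time. Rather than the generic priority-queue greedy (which is where a logarithmic factor would enter, as in the run-length-encoded regime), I would use that here the run lengths are explicit positive integers summing to at most $n+m$, allowing a direct greedy/sweep over the runs; combined with the $O(n+m)$ extraction of the block structure, this proves the theorem.

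I expect the main obstacle to be proving the structural statement of the second paragraph rigorously, together with its endpoint bookkeeping. Subtleties include ruling out cheaper, non-transversal ways of passing through a one-block, showing that canceling offset changes never help, and—most delicately—handling the endpoint contribution. The latter genuinely splits into several cases: near a corner the escape cost into an adjacent zero-block is controlled by the lengths of the first/last runs, but degenerate inputs (for instance a single run on one side, where one direction of escape is unavailable, forcing a cost as large as $\max(n,m)$) behave differently and must be treated separately. I would also need to verify that the parity of $|k-\ell|$ matches the forced corner penalties precisely, so that the ``$|k-\ell|/2$ crushes plus $O(1)$ endpoint correction'' accounting is exactly right. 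Once this lemma and its boundary cases are nailed down, the reduction to non-adjacent selection and its linear-time solution are comparatively routine.
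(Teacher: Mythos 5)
Your overall pipeline matches the paper's: reduce binary DTW to the problem of choosing $r$ pairwise non-adjacent run lengths of minimum total sum, then solve that selection problem fast. The difference in the first stage is that the paper imports the reduction wholesale from Abboud--Backurs--Williams (Lemma~\ref{lem:reduction}), whereas you propose to re-derive it from scratch via the checkerboard block structure of the mismatch matrix and an exchange argument over lattice paths. That is a legitimate but much heavier route, and you correctly identify the endpoint bookkeeping and degenerate cases (e.g., one side having a single run) as the delicate part; as sketched it remains unproven, and your crossing count (``roughly $|k-\ell|/2$'') versus the paper's $k-\ell$ selection is the kind of convention mismatch that the actual ABW theorem, not intuition, has to settle. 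Still, this part is plausibly completable.

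The genuine gap is in the final step, which happens to be the paper's main algorithmic contribution. You assert that because the run lengths are positive integers summing to at most $n+m$, the non-adjacent minimum-sum selection of exactly $r$ elements admits ``a direct greedy/sweep over the runs.'' It does not. The naive greedy (take the smallest available element, delete its neighbors) is not even feasibility-preserving: on weights $(3,2,3)$ with $r=2$ it takes the middle element and then cannot complete a solution, while the unique feasible answer is the two outer elements. The correct greedy is iterative augmentation: taking an element $w_i$ must create a ``regret'' option of weight $w_{i-1}+w_{i+1}-w_i$, which is exactly the paper's chain-augmentation step $C \mapsto \aug(C)$ in the line-graph matching formulation, analyzed via Lemma~\ref{lem:hungarian}. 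That algorithm requires repeatedly extracting a minimum over dynamically changing keys, which with a heap costs $O(s\log s)$ --- precisely the logarithmic factor you were trying to avoid, and bounded integer keys alone do not rescue a bucket structure, since a newly created key could a priori fall below the current scan position. The missing idea is the paper's monotonicity property (Lemma~\ref{lem:monotonicity}, proved via the disjoint double-augmentation Lemma~\ref{lem:double_jump}): the successive augmentation costs satisfy $\Delta_1 \le \Delta_2 \le \cdots \le \Delta_r$, so a bucket array scanned by a forward-only counter, incremented at most $\max(m,n)$ times in total, serves as the priority queue and yields $O(m+n)$ time (Lemma~\ref{lem:linear_time}). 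Without proving some form of this monotonicity (or an equivalent Monge-type structure), your linear-time claim for the selection step is unsupported.
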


We also consider the case where $x$ and $y$ are run-length
encoded. That is, $x$ (and similarly $y$) is given as a sequence of
pairs $(i_1, a_1), (i_2, a_2), \ldots$ indicating that the $j$-th run
consists of $i_j$ copies of the letter $a_j$.

\begin{thm}
  Let $x \in \{0, 1\}^n$ and $y \in \{0, 1\}^m$ be binary
  strings. Suppose that $x$ and $y$ are run-length encoded, and that
  the total number of runs in $x$ and $y$ is $\ell$.  Then
  $\dtw(x, y)$ can be computed in time $O(\ell \log \ell)$.
\label{thm:main2}
\end{thm}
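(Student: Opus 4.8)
The plan is to reuse the reduction from binary DTW to a combinatorial selection problem over runs that underlies the proof of Theorem~\ref{thm:main}, and to observe that, when $x$ and $y$ are given in run-length-encoded form, this reduction can be carried out and solved while only ever touching the $O(\ell)$ runs, never the $n + m$ individual characters. Concretely, I would first rephrase an optimal correspondence as a monotone warping path through the $k \times k'$ grid whose cells are the run-by-run rectangles (here $k$ and $k'$ are the numbers of runs in $x$ and $y$, with $k + k' = \ell$). Because the alphabet is binary and runs alternate in value, a rectangle contributes cost per unit area exactly when the two runs defining it disagree, so the grid is a checkerboard of ``free'' and ``charged'' rectangles whose dimensions are the run lengths. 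Minimizing the total charged area crossed by a monotone path is precisely the min-weight matching / non-adjacent-selection problem described in the introduction, with an instance of size $O(\ell)$.

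Next I would show that this instance can be built in $O(\ell)$ time directly from the run-length encodings: the weight of each potential match or forced crossing is a simple function (a minimum or a difference) of a constant number of run lengths, computable in $O(1)$, so no individual character is ever examined. This is the only place where the run-length-encoded setting differs structurally from the setting of Theorem~\ref{thm:main}, where the analogous instance is read off in $O(n + m)$ time by scanning.

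The remaining step, and the main source of the $\log \ell$ factor, is to solve the resulting selection problem in time $O(\ell \log \ell)$ rather than in time linear in the total length. I would do this with the standard exchange-argument greedy for choosing a prescribed number of pairwise-non-adjacent elements of minimum total weight: repeatedly extract the globally cheapest admissible element from a priority queue, then splice out its two neighbors and reinsert a combined ``swap-back'' element, so that each of the $O(\ell)$ selection steps costs $O(\log \ell)$. Alternatively, one can invoke the Monge structure of the underlying grid DP and apply the $O(n \log n)$-time procedure of Axiotis and Tzamos referenced above. Correctness of the greedy follows from an exchange argument showing that the combined-element trick preserves optimality at every step.

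The hardest part will be the bookkeeping in the reduction itself, namely handling the boundary and parity cases cleanly. When $x$ and $y$ begin or end with runs of different values, or when the two strings have unequal numbers of $0$-runs and $1$-runs, the checkerboard has defects at its corners that force the path through charged rectangles regardless of the selection, and these forced costs must be accounted for separately and shown to be computable in $O(\ell)$ time. Verifying that the selection problem faithfully captures $\dtw(x, y)$ in all of these cases---that is, that an optimal non-adjacent selection corresponds to an optimal warping path and vice versa---is the crux, whereas the $O(\ell \log \ell)$ running time then follows immediately from the priority-queue implementation.
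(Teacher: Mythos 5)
Your proposal is correct and takes essentially the same approach as the paper: your run-grid/checkerboard reformulation is the Abboud--Backurs--Williams reduction to selecting $r$ non-adjacent weights of minimum sum (the paper's Lemma~\ref{lem:reduction} and Corollary~\ref{cor:reduction}, which also give the $O(\ell)$-time construction from the run-length encodings), and your priority-queue greedy with spliced ``swap-back'' elements is precisely the paper's Hungarian algorithm on the line graph, where your combined element plays the role of a maximal chain keyed by $\wt(\aug(C)) - \wt(C)$ and your exchange argument is Lemma~\ref{lem:hungarian}. The paper's Lemma~\ref{lem:run_length} implements this with a balanced binary tree and chain-merging in $O(s \log s) = O(\ell \log \ell)$, matching your bound.
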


\paragraph{A useful reduction}
We begin by employing a result of Abboud, Backurs, and
Williams \cite{DTWhard2}.

\begin{lem}[Theorem 8 of \cite{DTWhard2}]
  Computing $\dtw(x, y)$ of two strings $x \in \{0, 1\}^n$ and
  $y \in \{0, 1\}^m$ can be reduced in time $O(m + n)$ to (a constant
  number of instances of) the following problem: given a sequence
  $w_1, w_2, \ldots, w_s$ of $s \le \max(m, n)$ positive integers, and
  an integer $r \le s$, find a subsequence
  $w_{i_1}, w_{i_2}, \ldots, w_{i_r}$ of length $r$ that does not use
  any neighboring integers (i.e., $i_{j + 1} - i_j \ge 2$ for all $j$)
  and such that the sum $\sum_j w_{i_j}$ of integers is minimized. The
  integers in $w_1, w_2, \ldots, w_s$ sum up to at most $\max(m, n)$.
  \label{lem:reduction}
\end{lem}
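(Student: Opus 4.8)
The plan is to route through the warping-path view of $\dtw$ and distill the structure of an optimal correspondence on binary inputs. First I would record the equivalent lattice-path formulation: a correspondence $(\overline{x},\overline{y})$ is precisely a monotone path through the $n\times m$ grid from $(1,1)$ to $(n,m)$ (using right, up, and diagonal steps), and since the cost $d(x_p,y_q)$ is $0$ on matches and $1$ on mismatches, the value of the correspondence equals the total length the path spends inside cells $(p,q)$ with $x_p\neq y_q$. The grid is partitioned by the runs of $x$ and $y$ into rectangular blocks, and the \emph{bad} (mismatch) blocks are exactly the products of a $1$-run of one string with a $0$-run of the other. The goal is to turn ``minimize length spent in bad blocks'' into the stated non-adjacent subsequence problem.

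Next I would exploit the fact that expansions only lengthen runs and never split or delete them, so $\overline{x}$ and $\overline{y}$ inherit the alternating $0$-run/$1$-run patterns of $x$ and $y$ and every run of each string survives into the alignment. I would then argue that an optimal path matches up same-character runs as much as possible, so that the only cost comes from runs that cannot be paired with a same-character run of the other string and must instead be \emph{swallowed} by an opposite-character region, each such run contributing its (original) length. After normalizing to the constantly many cases determined by the first and last characters of $x$ and $y$ and by which string carries more $1$-runs (resp.\ $0$-runs), the set of runs that must be swallowed is the \emph{excess} collection of same-character runs of the majority string, and minimizing cost becomes choosing which of those runs to swallow.

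Then I would pin down the combinatorial shape of this choice. The number $r$ of runs that must be swallowed is forced by the difference in the counts of same-character runs between the two strings, and the weights $w_1,\dots,w_s$ are the candidate run lengths, which are positive integers summing to at most $\max(m,n)$ because they form a subset of the run lengths of a string of length at most $\max(m,n)$. The non-adjacency constraint $i_{j+1}-i_j\ge 2$ arises because two swallowed runs cannot be consecutive in the selection sequence without the intervening (matched) run being accounted for; making this precise is what links the alternating run structure to the forbidden-neighbor condition. Assembling the candidate weight sequences and running the subsequence problem on each of the constantly many cases, then taking the minimum, is an $O(m+n)$-time reduction, as all the run-length data is available in linear time.

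The main obstacle I expect is the second step: rigorously establishing the ``match greedily, swallow the excess'' normal form for an optimal correspondence, as opposed to an alignment that splits one run's cost across several opposite-character regions or exploits a more intricate staircase. I would attack this with an exchange argument on the warping path, locally rewriting any alignment without increasing its cost until it consists of maximal matched blocks separated by single swallowed runs, followed by a careful analysis of the boundary runs. Getting the exact value of $r$ and the precise non-adjacency constraint correct across all of the endpoint and majority cases is where the bookkeeping will be most error-prone.
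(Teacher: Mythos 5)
Your plan follows essentially the same route as the paper: the paper does not actually prove this lemma but imports it as Theorem 8 of \cite{DTWhard2}, and its accompanying intuition sketch---the excess runs of the string with more runs being ``covered up'' (engulfed) by opposite-valued runs of the other string, the covered runs therefore being non-adjacent, the cost equaling the sum of their original lengths, and $O(1)$ cases of additional casework for disagreeing first/last letters---matches your outline point for point, including the observation that the weights sum to at most $\max(m,n)$ because they are run lengths of one input string. The exchange/normalization argument you correctly flag as the hard step (together with the converse direction, which you should make explicit: that \emph{every} non-adjacent selection of runs is realizable by a correspondence of exactly that cost, so the reduction is an equality rather than just an upper bound) is precisely what the paper delegates to the citation, so your proposal is consistent with, and no less complete than, the paper's own treatment.
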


When $x$ and $y$ are run-length encoded (meaning each run is encoded
by its length), then the following extension of Lemma
\ref{lem:reduction} is also useful.
\begin{cor}
  Suppose that $x$ and $y$ are run-length encoded, and that $k$ and
  $\ell$ are the number of runs in $x$ and $y$, respectively. Then the
  reduction in Lemma \ref{lem:reduction} takes time $O(k + \ell)$ and
  results in sequences $w_1, w_2, \ldots, w_s$ of length
  $s \le \min(k, \ell)$.
  \label{cor:reduction}
\end{cor}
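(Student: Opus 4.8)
The plan is to revisit the reduction underlying Lemma~\ref{lem:reduction} and to observe that, although it is stated in terms of the uncompressed strings, it is in fact \emph{oblivious} to everything except the run structure of $x$ and $y$. First I would isolate from the proof of Theorem~8 of \cite{DTWhard2} the precise recipe that produces the sequence $w_1, \ldots, w_s$ together with the target length $r$. I expect each $w_i$ to be a simple function of one run length (or of a pair of consecutive run lengths), with the ``no neighboring integers'' constraint encoding the fact that two adjacent runs cannot both be treated the same way in an optimal correspondence. Once the $w_i$ are exhibited explicitly as functions of the run lengths, both claims of the corollary reduce to bookkeeping.

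For the running time, the key point is that the reduction never needs to read individual characters of $x$ or $y$, only the lengths and letters of their runs. Since the run-length encoding hands us exactly this information, we can generate the $w_i$ in amortized constant time apiece by scanning the two run sequences in parallel, yielding the claimed $O(k + \ell)$ bound; crucially, no step requires expanding a run to its full length, which is what would otherwise blow the cost up to $O(m + n)$ as in Lemma~\ref{lem:reduction}.

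For the length bound $s \le \min(k, \ell)$, I would argue that each entry of the produced sequence is associated with a distinct run of whichever of the two strings has the \emph{smaller} number of runs: intuitively, the choices available in an optimal binary correspondence are indexed by the runs of that shorter-in-runs string, since the runs of the other string are consumed in matching them. This gives a count of at most $\min(k, \ell)$ rather than the cruder $\max(m, n)$ of Lemma~\ref{lem:reduction}. I expect the first step to be the main obstacle: extracting from the cited (but not reproduced) proof of Theorem~8 a clean enough description of the $w_i$ to make the per-run accounting rigorous, and in particular to certify the $\min(k, \ell)$ rather than $\max(k, \ell)$ count. Once that description is in hand, the time bound and the length bound both follow immediately.
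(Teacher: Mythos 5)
Your argument for the $O(k+\ell)$ time bound matches the paper's implicit reasoning: the paper states Corollary~\ref{cor:reduction} without a separate proof, relying on exactly the observation you make, namely that the reduction of \cite{DTWhard2} never inspects individual characters, only the lengths and letters of runs, so given the run-length encodings it is a single $O(k+\ell)$-time scan. Up to that point you are reconstructing what the paper leaves implicit.

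The genuine gap is in your justification of the length bound, which gets the direction of the reduction backwards. As the paper's own intuition paragraph spells out, the weights $w_i$ are the lengths of the runs of the string with \emph{more} runs: writing $k \ge \ell$, the optimal correspondence expands runs of the run-poorer string $y$ so as to ``cover up'' a non-adjacent collection of runs of the run-richer string $x$, and the instance produced by the reduction lists the lengths of $x$'s runs as the candidates to be covered. So the per-run accounting indexes the sequence by the runs of the longer-in-runs string, yielding $s \le \max(k,\ell)$, and your claim that the entries are ``indexed by the runs of the shorter-in-runs string, since the runs of the other string are consumed in matching them'' is the reverse of what the reduction does: all (interior) runs of the run-richer string are candidates, and their weights must all survive into the instance, since any one of them could be the cheapest to sacrifice. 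One can also see concretely that your route cannot certify the stated bound: the instance must admit $r$ pairwise non-adjacent selections, which forces $r \le \lceil s/2 \rceil$, and when $k$ is much larger than $\ell$ the required $r$ (on the order of $k - \ell$ covered runs, per the paper's sketch) exceeds $\lceil \min(k,\ell)/2 \rceil$. In other words, the rigorization step you yourself flag as the main obstacle would, if carried out from the cited reduction, deliver $s \le \max(k,\ell)$ rather than $\min(k,\ell)$; the paper's intuition paragraph supports only the $\max$ bound as well (the $\min$ in the statement appears to be a slip), and note that the weaker bound is all that is used downstream, since Theorem~\ref{thm:main2} only needs $s = O(k+\ell)$ in combination with Lemma~\ref{lem:run_length}.
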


Although we will not re-prove Lemma \ref{lem:reduction} here, we do
give a brief intuition. Suppose for simplicity that both $x$ and $y$
begin and end with $0$, and suppose that $x$ has more runs than
$y$. If $x$ has $k$ runs and $y$ has $\ell$ runs, then the optimal
correspondence $(\overline{x}, \overline{y})$ will select $k - \ell$
runs $R$ in $x$ and the correspondence will contain miss-matches
$\overline{x}_i \neq \overline{y}_i$ only for $\overline{x}_i$'s from
those runs $R$. In particular, the expansion $\overline{y}$ of $y$
``covers up'' the runs in $R$ by expanding runs in $y$ to engulf the
runs in $R$. A run in $y$ can only ``cover up'' a run in $x$ if the
two runs have different values (one run is of $0$s and the other is of
$1$s). Consequently, the runs $R$ in $x$ that are covered up cannot be
adjacent to one-another. That is, no two runs in $R$ can appear
adjacently in $x$. This turns out to be the only constraint on $R$,
however, and subject to this constraint, the cost of the
correspondence $(\overline{x}, \overline{y})$ is minimized by
selecting the runs in $R$ to have the minimum possible total
length. Thus the reduction from Lemma \ref{lem:reduction} can be
thought of as follows: let $w_1, w_2, \ldots, w_k$ be the lengths of
the runs in $x$. Then the dynamic time warping distance is given by
$$\dtw(x, y) = \min_{i_1, i_2, \ldots, i_{k - \ell}} \sum_{j = 1}^{k - \ell} w_{i_j},$$
where $i_1 < i_2 < \cdots < i_{k - \ell}$ and where
$i_j + 1 \neq i_{j + 1}$ for any $j$. In order to handle cases where
$x$ and $y$ disagree in their first or last letters, a small amount of
additional casework is necessary, resulting in a reduction to $O(1)$
instances of the subsequence problem, rather than just a single
instance \cite{DTWhard2}.


\paragraph{Relationship to bipartite matching}
The problem given by Lemma \ref{lem:reduction} can be reformulated as
a problem of minimum-weight bipartite matching. Consider the line
graph $G$ with vertices $V = \{v_0, v_1, \ldots, v_s\}$, with edges
$E = \{e_1 = (v_0, v_1), \ldots, e_s = (v_{s - 1}, v_s)\}$, and with
edge-weights $\wt(e_i) = w_i$. Then the problem described in Lemma
\ref{lem:reduction} becomes: find the minimum-weight matching
$M \subseteq E$ such that $|M| = r$.

Our algorithm for computing $\dtw(x, y)$ hinges on the relationship to
minimum-weight bipartite matching. In order to efficiently compute
$\dtw(x, y)$, we will construct the minimum-weight matching $M$ of size
$|M| = r$ by simply performing iterative path augmentation.

\paragraph{The Hungarian Algorithm for weighted bipartite matching}
One of the simplest algorithms for weighted bipartite matching is the
so-called \defn{Hungarian Algorithm} \cite{hungarian1, hungarian2,
  hungarian3, thorup2004integer,fredman1987fibonacci}. Although the
Hungarian Algorithm applies to arbitrary weighted bipartite graphs, we
will be discussing the algorithm and its properties exclusively in the
context of our line graph $G$. In order to describe the algorithm in
the context of a line graph, we first introduce several useful
notations.

Formally, a \defn{matching} $M$ in the line graph $G$ is a subset
$M \subseteq E$ such that $|M \cap \{e_i, e_{i + 1}\}| \le 1$ for each
$i$. The \defn{weight} $\wt(M)$ is given by $\sum_{e_i \in M} \wt(e_i)$. A
\defn{chain} $C$ in a matching $M$ is a set of the form
$C = \{e_i, e_{i + 2}, e_{i + 4}, \ldots, e_{i + 2c}\} \subseteq M$
for some $c \in \mathbb{N}$. The chain $C$ is \defn{maximal} if
$e_{i - 2}, e_{i + 2c + 2} \not\in M$. The \defn{augmentation} of a
chain $C = \{e_i, e_{i + 2}, e_{i + 4}, \ldots, e_{i + 2c}\}$ is the
new chain
$\aug(C) = \{e_{i - 1}, e_{i +1}, e_{i + 3}, \ldots, e_{i + 2c + 1}\}
\cap E$. A matching $M'$ is said to be a \defn{augmentation} of a
matching $M$ if $M' = M \setminus C \cup \aug(C)$ for some maximal
chain $C$ in $M$, and if $|M'| = |M| + 1$.\footnote{Note that $M \setminus C \cup \aug(C)$ evaluates as $(M \setminus C) \cup \aug(C)$ by order of operations.} Note that
$M \setminus C \cup \aug(C)$ is guaranteed to be a matching for any
maximal chain $C$.

In order to simplify discussion, we also introduce the notion of an
empty chain. For $i \in [s]$, a matching $M$ contains the \defn{\boldmath$i$-th
  empty chain} $\emptyset_i$ if $M$ does not contain any of
$e_{i - 1}, e_i, e_{i + 1}$. In this case the empty chain
$\emptyset_i$ is considered to be maximal, and the augmentation
$\aug(\emptyset_i)$ is defined to be $e_i$. Thus, if a matching $M'$
equals $M \cup e_i$ for some edge $e_i \not\in M$, then the matching
$M'$ can be thought of as
$M \setminus \emptyset_i \cup \aug(\emptyset_i)$, making $M'$ an
augmentation of $M$.

The Hungarian Algorithm constructs a matching $M$ of size $r$ as
follows. The algorithm begins with the empty matching $M_0$. The
algorithm then iteratively constructs $M_1, M_2, \ldots, M_r$, where
each $M_i$ is a minimum-weight augmentation of $M_{i - 1}$. That is,
$M_i$ is permitted to be any augmentation of $M_{i - 1}$ that achieves
the minimum attainable value for $\wt(M_{i})$ (over all augmentations of $M_{i - 1}$). The final matching
$M_r$ consists of $r$ edges and is given as the output matching $M$.

Tarjan and Ramshaw (Proposition 3-8 of \cite{ramshaw2012minimum})
showed that the Hungarian Algorithm outputs a matching $M_r$ with the
minimum possible weight (out of all $r$-edge matchings). Note that we focus only on $r \le \lceil s / 2 \rceil$, since $\lceil s / 2 \rceil$ is the size of the largest matching in our line graph $G$.

\begin{lemma}[Proposition 3--8 of \cite{ramshaw2012minimum}]
  For $r \le \lceil s / 2 \rceil$, the matching $M_r$ has the minimum
  weight out of all $r$-edge matchings.
  \label{lem:hungarian}
\end{lemma}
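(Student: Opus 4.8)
The plan is to prove Lemma \ref{lem:hungarian} by induction on $r$, using an exchange (symmetric-difference) argument that becomes especially clean in the line graph $G$. The base case $r = 0$ is immediate, since $M_0 = \emptyset$ is the unique, hence minimum-weight, $0$-edge matching. For the inductive step I would assume that $M_{r-1}$ has minimum weight among all $(r-1)$-edge matchings, and then show that the minimum-weight augmentation $M_r$ of $M_{r-1}$ has minimum weight among all $r$-edge matchings. The hypothesis $r \le \lceil s/2 \rceil$ would be used only to guarantee that $r$-edge matchings exist and that the augmentation step producing $M_r$ is always well-defined.

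The first ingredient is to identify the chain-augmentations of the paper with the classical augmenting paths of matching theory. In $G$, augmenting a maximal chain $C = \{e_i, \ldots, e_{i+2c}\}$ to $\aug(C)$ is exactly the operation of flipping a maximal alternating path $P$ whose matched edges are $C$ and whose unmatched edges are $\aug(C)$; the maximality condition $e_{i-2}, e_{i+2c+2} \notin M$ corresponds precisely to the endpoints of $P$ being free, and the empty-chain case corresponds to the trivial single-edge augmenting path. Consequently, for any matching $M$ the augmentations of $M$ are in bijection with its augmenting paths, and augmenting along $P$ changes the weight by exactly $\wt(\aug(C)) - \wt(C)$.

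The heart of the argument is the exchange step. Let $N$ be an arbitrary $r$-edge matching; it suffices to exhibit an augmentation of $M_{r-1}$ of weight at most $\wt(N)$, since then minimality of $M_r$ among augmentations gives $\wt(M_r) \le \wt(N)$, and $N$ is arbitrary. I would consider the symmetric difference $D = M_{r-1} \oplus N$. Because $G$ is acyclic, $D$ decomposes into vertex-disjoint alternating paths, with no cycles, each of which flips $M_{r-1}$ into another matching. I would classify each component $P$ by $\sigma(P) = |P \cap N| - |P \cap M_{r-1}| \in \{-1, 0, +1\}$ and set $\delta(P) = \wt(P \cap N) - \wt(P \cap M_{r-1})$, so that $\sum_P \sigma(P) = 1$ and $\sum_P \delta(P) = \wt(N) - \wt(M_{r-1})$. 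The components with $\sigma(P) = +1$ are genuine augmenting paths for $M_{r-1}$. Here the optimality of $M_{r-1}$ at size $r-1$ enters: flipping any single $\sigma = 0$ component, or any vertex-disjoint pair of a $\sigma = +1$ and a $\sigma = -1$ component, yields another $(r-1)$-matching, so each such $\delta$-contribution is nonnegative. Pairing the (one-more-numerous) $+1$ components with the $-1$ components and discarding the nonnegative $0$-components leaves a single unpaired augmenting path $P^*$ with $\delta(P^*) \le \sum_P \delta(P) = \wt(N) - \wt(M_{r-1})$; augmenting along $P^*$ then produces an $r$-matching of weight at most $\wt(N)$, closing the induction.

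The step I expect to be the main obstacle is exactly this exchange argument: one must verify that $D$ is truly a disjoint union of alternating paths with no cycles, that every $\sigma = +1$ component has free endpoints so that it really is an augmentation permitted by the algorithm, and — most delicately — that the pairing of $+1$ with $-1$ components, justified by minimality of $M_{r-1}$, isolates one augmenting path carrying the full weight decrease. The accompanying fact that flipping any subset of the vertex-disjoint components of $D$ produces a valid matching is routine but must be stated with care, since it is what licenses invoking the inductive hypothesis on the auxiliary $(r-1)$-matchings.
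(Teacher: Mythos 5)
Your proof is correct, but it takes a genuinely different route from the paper's. You run the classical Berge-style exchange: compare $M_{r-1}$ against an arbitrary $r$-edge matching $N$ via the symmetric difference $M_{r-1} \oplus N$, which in the line graph decomposes into vertex-disjoint alternating paths (no cycles), classify components by the sign $\sigma$, use optimality of $M_{r-1}$ to show that every $\sigma = 0$ component and every paired $(+1,-1)$ couple contributes nonnegative weight, and isolate a single augmenting path $P^*$ with $\delta(P^*) \le \wt(N) - \wt(M_{r-1})$; your identification of the $+1$ components with maximal-chain augmentations (free endpoints correspond to chain maximality, single-edge paths to empty chains $\emptyset_i$) is exactly right, and the details you flag as delicate do go through by the standard arguments. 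The paper instead avoids the symmetric difference altogether: it decomposes the \emph{entire} edge set as the disjoint union $E = \bigcup_C \left( C \cup \aug(C) \right)$ over maximal chains $C$ of $M_{r-1}$ (empty chains filling the gaps), applies pigeonhole to find one block in which the optimal matching $M^*_r$ has more edges than $M_{r-1}$ does --- forcing $M^*_r$ to contain $\aug(C)$, the unique $(|C|+1)$-edge matching inside that block --- and then finishes with the single exchange inequality $\wt(M_{r-1}) \le \wt(M^*_r \setminus \aug(C) \cup C)$, adding $\wt(\aug(C)) - \wt(C)$ to both sides. Both proofs invoke the inductive hypothesis on auxiliary $(r-1)$-edge matchings obtained by swapping $\aug(C)$ for $C$; yours needs it for several such matchings (each $0$-component flip and each $(+1,-1)$ pair), the paper's for exactly one. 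What your route buys is generality: it is the standard successive-augmentation argument and extends essentially verbatim to arbitrary bipartite graphs (cycles in the symmetric difference have $\sigma = 0$ and are handled the same way), so it re-derives the cited Tarjan--Ramshaw proposition rather than specializing it. What the paper's route buys is brevity: the block decomposition of $E$ and the rigidity of matchings within a block are special to the line graph, and they collapse all of your component bookkeeping into one pigeonhole step.
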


Whereas Tarjan and Ramshaw extend Lemma \ref{lem:hungarian} to
arbitrary bipartite graphs, we are only interested in the line
graph. This allows for an especially simple proof of the lemma.

\begin{proof}[Proof of Lemma \ref{lem:hungarian}]
  Let $r \ge 1$ and suppose by induction that $M_{r - 1}$ is
  minimum-weight out of $(r - 1)$-edge matchings.

  Let $M^*_r$ be a minimum-weight matching of size $r$. The edges $E$
  can be decomposed as the disjoint union,
  \begin{equation}
    E = \bigcup_{\text{maximal chain }C \subseteq M_{r - 1}} C \cup \aug(C).
    \label{eq:decomp}
  \end{equation}
  Since $|M^*_r| = |M_{r - 1}| + 1$, and since $E$ decomposes into
  \eqref{eq:decomp}, there must be a maximal chain
  $C$ in $M_{r - 1}$ for which
  $$|M^*_{r} \cap (C \cup \aug(C))| > |M_{r - 1} \cap (C \cup
  \aug(C))| = |C|.$$ Recalling that $M^*_r$ is a matching, it follows
  that $M^*_r$ contains the chain $\aug(C)$ of size $|C| + 1$.

  Now we turn our attention to $M_r$, the minimum-weight augmentation
  of $M_{r - 1}$. Using the definition of $M_r$, we know that
  $\wt(M_r) \le \wt(M_{r - 1} \setminus C \cup \aug(C))$. To prove that $M_r$ is optimal out of $r$-edge matchings, it therefore suffices to show that
  \begin{equation}
    \wt(M_{r - 1} \setminus C \cup \aug(C)) \le \wt(M^*_r).
    \label{eq:newwt}
  \end{equation}
  By the assumption that $M_{r - 1}$ is a minimum-weight matching, we
  know that
  \begin{equation}
    \wt(M_{r - 1}) \le \wt(M^*_r \setminus \aug(C) \cup C).
    \label{eq:oldwt}
  \end{equation}
  If we remove $C$ from the matchings on both sides of
  \eqref{eq:oldwt}, and then insert $\aug(C)$ into both matchings,
  then we arrive at \eqref{eq:newwt}, as desired.
\end{proof}

\paragraph{Efficiently constructing the matchings}
Again using the fact that $G$ is a line graph on $s + 1$ vertices, the
matchings $M_0, M_1, M_2, \ldots$ can easily be computed in time
$O(s \log s)$. 

\begin{lemma}
  For any $r \le \lceil s / 2 \rceil$, the matching $M_r$ can be
  computed in time $O(s \log s)$.
  \label{lem:run_length}
\end{lemma}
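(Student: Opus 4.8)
The plan is to simulate the Hungarian Algorithm's $r$ augmentation steps using a min-heap together with a doubly-linked list, spending only $O(\log s)$ amortized time per step. The central idea is a correspondence: I maintain a list of ``super-elements,'' one for each maximal chain $C$ of the current matching $M_i$ that admits a size-increasing augmentation (including the relevant empty chains $\emptyset_j$), where the value stored at the super-element is exactly the augmentation cost $\wt(\aug(C)) - \wt(C)$. Two super-elements will be made adjacent in the list precisely when their augmentations conflict, i.e., when augmenting both at once would violate the matching constraint. For $M_0 = \emptyset$ this list is simply $e_1, \ldots, e_s$ with values $w_1, \ldots, w_s$ and the original left--right adjacency; I pad both ends with sentinel super-elements of value $+\infty$ to encode the fact that a chain touching a boundary is not augmentable. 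Because $r \le \lceil s/2 \rceil$ is at most the size of a maximum matching, a finite-value (genuinely size-increasing) super-element is always available to extract.

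A single step proceeds as follows. First I would extract the minimum-value super-element from the heap; by the correspondence this is exactly a minimum-weight augmentation of $M_{i-1}$, so performing it yields a valid $M_i$ in the sense of Lemma~\ref{lem:hungarian}. I then update the data structure locally: letting $L$ and $R$ denote the two list-neighbors of the selected super-element (of value $v$), with stored values $v_L$ and $v_R$, I delete $L$, the selected node, and $R$ from the list, splice in a single new super-element of value $v_L + v_R - v$ between $L$'s former left neighbor and $R$'s former right neighbor, and push this new value onto the heap. The deletions of $L$ and $R$ are handled lazily in the heap. The running sum of the extracted values equals $\wt(M_r)$, since $\wt(M_i) = \wt(M_{i-1}) + (\wt(\aug(C)) - \wt(C))$; and storing with each super-element the endpoints of its chain (merging these intervals at each step) lets me read off the actual edge set of $M_r$ at the end.

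The hard part will be proving that this purely local update preserves the correspondence. Concretely, I would establish the invariant that, after each step, the list's super-elements are exactly the augmentable maximal chains of the current matching, their stored values are the corresponding augmentation costs, and adjacency in the list coincides with augmentation conflict. The crux is a structural claim about what happens when the selected chain $C$ is augmented: because list-adjacency means the neighboring chains $L$ and $R$ conflict with $C$, the edges of $\aug(C)$ must come to lie at distance exactly $2$ from the nearest edges of $L$ and of $R$, so $L$, $\aug(C)$, and $R$ fuse into a single maximal chain $C'$ of $M_i$ while every other chain is unaffected. It then remains to verify the cost identity $\wt(\aug(C')) - \wt(C') = [\wt(\aug(L)) - \wt(L)] + [\wt(\aug(R)) - \wt(R)] - [\wt(\aug(C)) - \wt(C)]$, which follows because each augmentation cost is an alternating sum of consecutive weights and these telescope consistently when the three chains merge; the sentinel values of $+\infty$ make the identity produce the correct ``non-augmentable'' value whenever $C'$ reaches a boundary.

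Granting the invariant, the complexity is immediate. There are at most $r \le s$ steps, each performing a constant number of linked-list pointer updates and a constant number of heap operations (one extract-min, one insertion, and lazy skips of previously deleted nodes). Since the heap never holds more than $O(s)$ entries, each heap operation costs $O(\log s)$, giving a total of $O(s \log s)$; the final $O(s)$-time pass to output $M_r$ from the stored chain endpoints does not dominate.
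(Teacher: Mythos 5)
Your proposal is correct and follows essentially the same route as the paper: you simulate the Hungarian Algorithm's $r$ augmentation steps while maintaining the set of augmentable maximal chains (including empty chains) keyed by $\wt(\aug(C)) - \wt(C)$ in a priority structure, extracting the minimum at each step as justified by Lemma~\ref{lem:hungarian}. The only difference is the data-structure realization---your lazy-deletion min-heap plus doubly-linked list with the uniform triple-merge rule $v_L + v_R - v$ and $+\infty$ boundary sentinels, versus the paper's balanced binary tree $\mathcal{B}$ keyed the same way plus the endpoint array $\mathcal{A}$ with an explicit case analysis for when an augmented chain fuses with a neighbor---and both yield the same $O(s \log s)$ bound.
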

\begin{proof}
  We build $M_0, M_1, \ldots, M_r$ using the Hungarian algorithm. When
  going from $M_i$ to $M_{i + 1}$, we maintain two data structures:
  (1) a balanced binary tree $\mathcal{B}$ consisting of the maximal chains
  $C \subseteq M_i$ for which $|\aug(C)| = |C| + 1$, and sorted by the
  key $\wt(\aug(C)) - \wt(C)$; and (2) an array $\mathcal{A}$ of $s$
  ones and zeroes, where the ones correspond to the positions in which
  the maximal chains $C \subseteq M_i$ begin and end.
  
  To go from $M_i$ to $M_{i + 1}$, the minimum element of
  $\mathcal{B}$ is used to determine which chain $C$ to augment. This
  means that $M_{i + 1} = M_i \setminus C \cup \aug(C)$.  The array
  $\mathcal{A}$ is updated to reflect the update from $M_i$ to
  $M_{i + 1}$, and is used to determine whether the new augmented
  chain $\aug(C)$ combines with another chain $C' \subseteq M_i$ in
  order to form a larger maximal chain in $M_{i + 1}$. The tree
  $\mathcal{B}$ is then updated appropriately to reflect the transition from $M_i$ to $M_{i + 1}$.
  (The subtle case here is that, if $\aug(C)$ combines with another chain $C'$, then
  both $C$ and $C'$ are removed from $\mathcal{B}$ and replaced with a
  single node for the new chain $\aug(C) \cup C'$.)

  The tree $\mathcal{B}$ takes time $O(s \log s)$ to initialize and
  the array $\mathcal{A}$ takes time $O(s)$ to initialize (as all
  zeros). Constructing $M_r$ then takes time $O(r \log s)$.
\end{proof}

Corollary \ref{cor:reduction} and Lemma \ref{lem:run_length} combine
to imply Theorem \ref{thm:main2}.

In order to prove Theorem \ref{thm:main} we will need to prove several
additional properties of the matchings $M_0, M_1, M_2, \ldots$. The next lemma
shows that $M_{i + 2}$ can always be reached from $M_i$ via two
\emph{disjoint} chain augmentations.

\begin{lem}
  Consider $M_i$ and $M_{i + 2}$ for some $i$
  (satisfying $0 \le i \le \lceil s / 2 \rceil - 2$). There exist
  maximal chains $C_1$ and $C_2$ in $M_i$ such that
  $$M_{i + 2} = M_i \setminus (C_1 \cup C_2) \cup (\aug(C_1) \cup \aug(C_2)).$$
  \label{lem:double_jump}
\end{lem}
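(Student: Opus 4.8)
The plan is to prove the lemma by passing to the symmetric-difference picture and exploiting the fact that, in a line graph, every augmenting path is simply a contiguous interval of edges. By definition of the Hungarian Algorithm we may write $M_{i+1} = M_i \setminus C_1 \cup \aug(C_1)$ for some maximal chain $C_1$ of $M_i$, and $M_{i+2} = M_{i+1} \setminus D \cup \aug(D)$ for some maximal chain $D$ of $M_{i+1}$. Setting $P_1 = C_1 \cup \aug(C_1) = M_i \triangle M_{i+1}$ and $P_2 = D \cup \aug(D) = M_{i+1} \triangle M_{i+2}$, each of $P_1, P_2$ is the edge set of a contiguous sub-path $v_{a-1}, v_a, \ldots, v_b$ of $G$, and hence an \emph{interval} $\{e_a, e_{a+1}, \ldots, e_b\}$ of consecutive edges; its two endpoints are exactly the two vertices freed by the corresponding augmentation.

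First I would compute $M_i \triangle M_{i+2} = P_1 \triangle P_2$ by associativity of $\triangle$. Since the symmetric difference of two intervals is a union of at most two intervals, and since each resulting interval is a vertex-disjoint alternating path whose edge count differs between $M_i$ and $M_{i+2}$ by at most one, the identity $|M_{i+2}| - |M_i| = 2$ forces $M_i \triangle M_{i+2}$ to consist of \emph{exactly} two components, each having one more $M_{i+2}$-edge than $M_i$-edge. I would then observe that such a $+1$ component, being a contiguous interval $\{e_a, \ldots, e_b\}$, has its $M_i$-edges forming a chain $C = \{e_{a+1}, e_{a+3}, \ldots, e_{b-1}\}$ and its $M_{i+2}$-edges equal to $\aug(C) = \{e_a, e_{a+2}, \ldots, e_b\}$; maximality of the component as a connected piece of $M_i \triangle M_{i+2}$ gives $e_{a-1}, e_{b+1} \notin M_i$, so $C$ is a maximal chain of $M_i$ (the degenerate case $a = b$ being the empty chain $\emptyset_a$ with $\aug(\emptyset_a) = e_a$). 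Taking $C_1, C_2$ to be the two maximal chains arising from the two components, their augmentations are precisely those components, and the components being vertex-disjoint makes $C_1, C_2$ disjoint; hence $M_{i+2} = M_i \triangle (P_1 \triangle P_2) = M_i \setminus (C_1 \cup C_2) \cup (\aug(C_1) \cup \aug(C_2))$, as required.

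The main obstacle is justifying that $P_1 \triangle P_2$ really splits into two vertex-disjoint pieces, rather than merging into one or overlapping partially. Two facts must be nailed down. First, $P_1$ and $P_2$ cannot share a boundary vertex: such a vertex would be an endpoint of the $M_i$-augmentation $P_1$, hence \emph{matched} in $M_{i+1}$, yet also an endpoint of the $M_{i+1}$-augmentation $P_2$, hence \emph{unmatched} in $M_{i+1}$ -- a contradiction. Second, $P_1$ and $P_2$ cannot overlap partially: the new edges $\aug(C_1)$ lie inside a single maximal chain of $M_{i+1}$, and $D$ is either equal to or disjoint from that chain, so as intervals $P_1$ and $P_2$ are either disjoint or properly nested. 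In the disjoint case the two components are $P_1$ and $P_2$ themselves; in the nested case (with $P_1$ strictly interior to $P_2$, since they share no endpoint) the two components are the left and right portions of $P_2 \setminus P_1$. Either way one obtains exactly two vertex-disjoint $+1$ paths, which is the crux of the argument.
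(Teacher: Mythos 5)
Your proof is correct, and it takes a genuinely different route from the paper's. The paper argues directly on the two consecutive augmentation steps: letting $D_1$ and $D_2$ be the maximal chains augmented in passing from $M_i$ to $M_{i+1}$ and from $M_{i+1}$ to $M_{i+2}$, it splits into the case where $D_2$ is still a maximal chain of $M_i$ (take $C_1 = D_1$, $C_2 = D_2$) and the case where $D_2 = D_2' \cup \aug(D_1)$ for a single maximal chain $D_2'$ of $M_i$, in which case an explicit computation shows the composite move equals augmenting $D_2'$ together with an empty chain $\emptyset_j$. You instead pass to $M_i \triangle M_{i+2} = P_1 \triangle P_2$ and run the classical alternating-path analysis, specialized to the fact that augmenting paths in a line graph are edge intervals: at most two components, each alternating with net change in $\{-1,0,+1\}$, total $+2$, hence exactly two $+1$ components, each of which you correctly identify as $C \cup \aug(C)$ for a maximal chain $C$ of $M_i$ (with the boundary edges excluded from $M_i$ by the matching property plus component maximality, and the singleton case handled by $\emptyset_a$). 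What your route buys is uniformity: the paper's dichotomy implicitly assumes $\aug(D_1)$ merges with a chain of $M_i$ on at most one side, but $D_2 = L \cup \aug(D_1) \cup R$ with both $L$ and $R$ nonempty can actually occur (e.g.\ weights $5,1,5,1,5,1,5$ on $s = 7$ edges, with $M_2 = \{e_2, e_6\}$, $M_3 = \{e_2,e_4,e_6\}$, $M_4 = \{e_1,e_3,e_5,e_7\}$); there $D_2 \setminus \aug(D_1) = L \cup R$ is not a chain, so the paper's case~2 does not literally apply, whereas your component decomposition automatically outputs $C_1 = L$ and $C_2 = R$, with $D_1 \cup \aug(D_1)$ cancelling out of the symmetric difference --- and it likewise handles $D_2 = \aug(D_1)$ exactly (two empty chains) without special pleading. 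Two cosmetic remarks: the endpoints of $P_1$ are the two vertices \emph{newly covered} by the first augmentation rather than ``freed'' (your later matched/unmatched contradiction uses the correct orientation, so nothing breaks), and your dichotomy ``disjoint or properly nested'' is more than you strictly need, since the symmetric difference of \emph{any} two intervals has at most two components, so the counting argument already forces the desired structure even for partial overlap. In short: the paper's proof, where it applies, is shorter and stays inside the chain calculus; yours is the more robust and more generalizable argument.
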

\begin{proof}
  Let $D_1$ be the maximal chain augmented between $M_i$ and
  $M_{i + 1}$, and let $D_2$ be the maximal chain augmented between
  $M_{i + 1}$ and $M_{i + 2}$. If $D_2$ is a maximal chain in $M_i$,
  then we can simply set $C_1 = D_1$ and $C_2 = D_2$ in order to
  complete the lemma. On the other hand, if $D_2$ is not a maximal
  chain in $M_i$, then $D_2$ must be of the form $D_2' \cup \aug(D_1)$
  for some maximal chain $D_2'$ in $M_i$. It follows that
  $$M_{i + 2} = M_i \setminus (D_1 \cup D_2') \cup \aug(\aug(D_1)) \cup \aug(D_2').$$
  Observe $\aug(\aug(D_1))$ overlaps $\aug(D_2')$ in one edge, and
  otherwise consists of $D_1$ and some other new edge $e_j$. That is, $\aug(\aug(D_1)) \setminus (D_1 \cup \aug(D_2'))$ consists of a single edge $e_j$. It
  follows that
  $$M_{i + 2} = M_i \setminus (\emptyset_j \cup D_2') \cup (\aug(\emptyset_j) \cup \aug(D_2')),$$
  where $\emptyset_j$ is treated as the empty set.
  Setting $C_1 = \emptyset_j$ and $C_2 = D_2'$ completes the proof.
\end{proof}

Using Lemma \ref{lem:double_jump}, we can prove a monotonicity
property for $\Delta_1, \Delta_2, \ldots, \Delta_r$, where
$\Delta_i = \wt(M_i) - \wt(M_{i - 1})$.

\begin{lem}
  Let $r \le \lceil s / 2 \rceil$. Define
  $\Delta_1, \Delta_2, \ldots, \Delta_r$, where
  $\Delta_i = \wt(M_i) - \wt(M_{i - 1})$. Then
  $$\Delta_1 \le \Delta_2 \le \cdots \le \Delta_r.$$
  \label{lem:monotonicity}
\end{lem}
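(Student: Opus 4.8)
The plan is to prove the single-step inequality $\Delta_i \le \Delta_{i+1}$ for each $i \in \{1, \ldots, r-1\}$; chaining these together yields the full sequence $\Delta_1 \le \Delta_2 \le \cdots \le \Delta_r$. The central idea is to realize the two-step transition from $M_{i-1}$ to $M_{i+1}$ as a pair of \emph{disjoint} augmentations, and then to play those two augmentations off against the minimality that defines the single step $M_{i-1} \to M_i$. The averaging argument at the end is what forces the increment to be monotone.

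First I would apply Lemma \ref{lem:double_jump} to the pair $(M_{i-1}, M_{i+1})$; this is legitimate because $i+1 \le r \le \lceil s/2 \rceil$ forces $i \le \lceil s/2 \rceil - 1$, so the lemma applies with its index set to $i-1$. This produces maximal chains $C_1$ and $C_2$ in $M_{i-1}$ with
$$M_{i+1} = M_{i-1} \setminus (C_1 \cup C_2) \cup (\aug(C_1) \cup \aug(C_2)).$$
Writing $\delta_1 = \wt(\aug(C_1)) - \wt(C_1)$ and $\delta_2 = \wt(\aug(C_2)) - \wt(C_2)$, the disjointness of the two chains makes the weight change additive, so that
$$\Delta_i + \Delta_{i+1} = \wt(M_{i+1}) - \wt(M_{i-1}) = \delta_1 + \delta_2.$$

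Next I would invoke the defining minimality of the Hungarian step. Since $M_i$ is a minimum-weight augmentation of $M_{i-1}$, and since augmenting $C_1$ alone (respectively $C_2$ alone) is itself a valid single-step augmentation of $M_{i-1}$, we obtain $\Delta_i \le \delta_1$ and $\Delta_i \le \delta_2$, hence $\Delta_i \le \min(\delta_1, \delta_2) \le (\delta_1 + \delta_2)/2$. Combining this with the additive identity gives $\Delta_{i+1} = (\delta_1 + \delta_2) - \Delta_i \ge (\delta_1 + \delta_2)/2 \ge \Delta_i$, which is exactly the desired inequality.

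The main obstacle, such as it is, lies in the bookkeeping connecting Lemma \ref{lem:double_jump} to the minimality argument: one must be sure that $C_1$ and $C_2$ are genuinely maximal chains in $M_{i-1}$ (so each is a bona fide single augmentation whose increment is bounded below by $\Delta_i$), and one must handle uniformly the case where one of the $C_j$ is an empty chain $\emptyset_j$, where $\wt(C_j) = 0$ and $\aug(C_j) = e_j$. Lemma \ref{lem:double_jump} already delivers the chains in the required form, so once that correspondence is in place the remaining inequality is just the observation $\min(\delta_1,\delta_2) \le (\delta_1+\delta_2)/2$.
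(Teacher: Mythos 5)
Your proof is correct and takes essentially the same route as the paper's: apply Lemma \ref{lem:double_jump} to realize the step $M_{i-1} \to M_{i+1}$ as two disjoint maximal-chain augmentations, use additivity to get $\Delta_i + \Delta_{i+1} = \delta_1 + \delta_2$, and combine the minimality of the Hungarian step (each $C_j$ alone being a valid single augmentation, so $\Delta_i \le \min(\delta_1,\delta_2)$) with an averaging argument. Your sign conventions are in fact the corrected version of the paper's displayed identity, which has a sign slip ($\wt(C_j) - \wt(\aug(C_j))$ where $\wt(\aug(C_j)) - \wt(C_j)$ is meant); the underlying argument is identical.
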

\begin{proof}
  To compare $\Delta_i$ and $\Delta_{i + 1}$, we apply Lemma
  \ref{lem:double_jump} deduce that
  $$M_{i + 1} = M_{i - 1} \setminus (C_1 \cup C_2) \cup (\aug(C_1) \cup \aug(C_2)),$$
  for some two maximal chains $C_1, C_2 \subseteq M_{i - 1}$ (such
  that $|\aug(C_j)| \ge |C_j|$ for both chains $j \in \{1, 2\}$). It follows that
  $$\Delta_i + \Delta_{i + 1} = \wt(C_1) - \wt(\aug(C_1)) + \wt(C_2) - \wt(\aug(C_2)).$$
  This means that for some $j \in \{1, 2\}$, we have
  $$\frac{\Delta_i + \Delta_{i + 1}}{2} \ge \wt(C_j) - \wt(\aug(C_j)).$$
  By the definition of $\Delta_i$, and the fact that $M_i$ is the minimum-weight augmentation of $M_{i - 1}$,
  \begin{align*}
    \wt(M_{i - 1}) + \Delta_i    & = \wt(M_i) \\
                                 & \le \wt(M_{i - 1}) - \wt(C_j) + \wt(\aug(C_j)) \\
                                 & \le \wt(M_{i - 1}) + \frac{\Delta_i + \Delta_{i + 1}}{2}.
  \end{align*}
  It follows that $\Delta_i \le (\Delta_i + \Delta_{i + 1}) / 2$,
  which implies $\Delta_i \le \Delta_{i + 1}$.
\end{proof}

By exploiting the monotonicity of the $\Delta_i$'s, we can compute the
matching $M_r$ in time $O(n + m)$.

\begin{lem}
  For any $r \le \lceil s / 2 \rceil$, the matching $M_r$ can be
  computed in time $O(m + n)$.
  \label{lem:linear_time}  
\end{lem}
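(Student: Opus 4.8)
The plan is to reuse the algorithm of Lemma \ref{lem:run_length} almost verbatim and to shave off the single logarithmic factor it incurs. In that algorithm the only non-linear ingredient is the balanced binary tree $\mathcal{B}$, which serves as a priority queue on the augmentable maximal chains $C$, keyed by the augmentation cost $\wt(\aug(C)) - \wt(C)$; every other operation (maintaining the array $\mathcal{A}$, detecting and performing merges) is already $O(1)$ per step. So it suffices to replace $\mathcal{B}$ by a priority queue supporting the required operations in $O(1)$ amortized time, which I will do with a bucketed monotone priority queue whose correctness rests on the monotonicity established in Lemma \ref{lem:monotonicity}.

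First I would record two facts about the keys. Each key $\wt(\aug(C)) - \wt(C)$ is an integer; it is at most $\wt(\aug(C)) \le \sum_i w_i \le \max(m,n)$; and it is nonnegative, since at state $M_i$ the minimum key over all augmentable chains is exactly $\Delta_{i+1}$ (the minimum-weight augmentation is what defines $M_{i+1}$), and the $\Delta$'s are positive by Lemma \ref{lem:monotonicity} together with $\Delta_1 = \wt(M_1) \ge 1$. Thus every key that ever appears lies in $\{0, 1, \ldots, \max(m,n)\}$, so I can implement the queue as an array of $\max(m,n) + 1$ buckets, each a doubly-linked list of the chains having that key, together with the empty-chain positions $\emptyset_j$ (keyed by $w_j$) at which a fresh edge may still be inserted.

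The crucial point is that the queue is queried in monotone order. By Lemma \ref{lem:monotonicity} the extracted minima $\Delta_1 \le \Delta_2 \le \cdots$ are non-decreasing, and since the minimum key present at state $M_i$ equals $\Delta_{i+1} \ge \Delta_i$, every key remaining in the queue is always at least the value most recently extracted. Consequently a single pointer $p$ into the bucket array, advanced only forward, locates each successive minimum: to pass from $M_i$ to $M_{i+1}$ I scan buckets rightward from $p$ until a nonempty one is found. Because $p$ never decreases, the total cost of all scans is $O(\max(m,n))$. Each augmentation then modifies only $O(1)$ objects — the augmented chain, the $O(1)$ neighboring chains it may merge with, and the $O(1)$ empty-chain positions that become adjacent to a newly matched edge — and each modification is an $O(1)$ bucket insertion or deletion, with every updated chain reinserted at an index $\ge p$; the final matching $M_r$ is then read off from $\mathcal{A}$ in $O(s)$ time. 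Over the $r \le s$ augmentations this is $O(s) = O(\max(m,n))$ work.

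The step I expect to need the most care is precisely the justification that the pointer never has to move backward, i.e. that no update ever produces a key smaller than $p$. This is where Lemma \ref{lem:monotonicity} is essential: it must be checked that the newly formed (merged) chains and the surviving empty chains all have keys $\ge p$, which holds because their keys are among those whose minimum is the next extracted value $\Delta_{i+1} \ge p$. Granting this, initializing the buckets and the array $\mathcal{A}$ costs $O(\max(m,n))$, the pointer sweeps cost $O(\max(m,n))$, and the per-augmentation updates cost $O(\max(m,n))$, for a total of $O(\max(m,n)) = O(m + n)$.
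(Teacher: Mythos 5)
Your proposal is correct and matches the paper's proof essentially step for step: the paper likewise replaces the balanced binary tree of Lemma \ref{lem:run_length} with a bucket array indexed by the integer keys $\wt(\aug(C)) - \wt(C)$, maintains a forward-only counter justified by the monotonicity of the $\Delta_i$'s from Lemma \ref{lem:monotonicity}, and charges the $O(\max(m,n))$ pointer advances plus $O(1)$-time linked-list updates per augmentation to obtain $O(m+n)$ total. Your added remarks on the key range and positivity of the keys only make explicit what the paper leaves implicit.
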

\begin{proof}
  We modify the approach from Lemma \ref{lem:run_length} as
  follows. Rather than maintaining $\mathcal{B}$ as a balanced binary
  tree, we maintain $\mathcal{B}$ using what is essentially a dynamic
  bucket sort.

  At any given moment, $\mathcal{B}$ consists of $\max(n, m)$ buckets,
  where each bucket $i$ contains a linked list of the maximal chains
  $C$ whose key $\wt(\aug(C)) - \wt(C)$ equals $i$. (We also modify
  $\mathcal{A}$ to contain a pointer from the one-entries that
  represent the ends of chain $C$ to the linked-list element for $C$
  in $\mathcal{B}$.) Additionally, $\mathcal{B}$ maintains a counter
  $t$ indicating $\Delta_i$ for the most recent $M_i$ computed. In
  order to find the smallest element of $\mathcal{B}$, one simply
  repeatedly increments the counter $t$ until reaching a non-empty
  bucket, and then uses a chain $C$ from that bucket. By Lemma
  \ref{lem:monotonicity}, this always results in us finding the chain
  in $\mathcal{B}$ with the smallest key (i.e., there are never any
  non-empty buckets with indices smaller than our counter $t$).

  The initial state of $\mathcal{B}$ can be constructed time
  $O(m + n)$, since we are inserting $s \le m + n$ elements into
  buckets. The counter $t$ can only be incremented a total of
  $\max(m, n)$ times, and besides those increments, each operation on
  $\mathcal{B}$ takes constant time (making $O(1)$ modifications to
  linked lists).  It follows that the total running time of the
  algorithm is now $O(m + n)$, as desired.
\end{proof}

Lemma \ref{lem:reduction} and Lemma \ref{lem:linear_time} combine to
imply Theorem \ref{thm:main}.

\section{Conclusion}

This note gives a very simple linear time algorithm that computes
$\dtw(x, y)$ for two binary time series $x, y$. The algorithm makes
use of a simple connection between dynamic time warping and
minimum-weight bipartite matching. Although both the algorithm and the
analysis are extremely simple, the linear running time significantly
improves on the previous state of the art of $O(n^{1.87})$~\cite{DTWhard2}.

\paragraph{An open question}
Many applications of dynamic time warping use a \emph{constrained}
version of DTW, in which two the expansions $\overline{x}$ and
$\overline{y}$ are only allowed to pair up letters $x_i$ and $y_j$ if
$|i - j| \le k$ for some width-parameter $k$. This heuristic is known
as the Sakoe-Chiba Band heuristic \cite{dtwband} and is employed, for
example, in the commonly used library of Giorgino \cite{giorgino}. One
of the main reasons that the $k$-width constraint is added is that it
allows for a simple $O(nk)$-time algorithm (which is much faster than
$O(n^2)$ for small $k$). On the other hand, in the case of binary DTW,
the $k$-width constraint may also make DTW a richer similarity
measure. In particular, without the width constraint $\dtw(x, y)$
depends only on the number of runs in $x$ and $y$, and on the
properties of the string with more runs.

Thus we conclude with the following open question. What is the fastest
that binary DTW can be computed subject to the $k$-width constraint?
And, in particular, do $O(m + n)$-time algorithms exist for all $k$?

\section{Acknowledgments}
I would like to thank an anonymous reviewer for pointing out the related work of \cite{ax}.

\vspace{1 cm}


\bibliographystyle{plain}
\bibliography{writeup}

\end{document}